\documentclass{article}

\usepackage{arxiv}

\usepackage[utf8]{inputenc} % allow utf-8 input
\usepackage[T1]{fontenc}    % use 8-bit T1 fonts
\usepackage{hyperref}       % hyperlinks
\usepackage{url}            % simple URL typesetting
\usepackage{booktabs}       % professional-quality tables
\usepackage{amsfonts}       % blackboard math symbols
\usepackage{nicefrac}       % compact symbols for 1/2, etc.
\usepackage{microtype}      % microtypography
\usepackage{lipsum}

\usepackage{cite}
\usepackage{amsmath,amssymb,amsfonts}

\usepackage{graphicx}
\usepackage{algorithm}
\usepackage[noend]{algpseudocode}
\usepackage{hyperref}
\usepackage{textcomp}
\usepackage{threeparttablex} % for table footnotes
\usepackage{adjustbox}
\usepackage{mathtools}
\usepackage{caption}
\usepackage{float}
\usepackage{stmaryrd}
\usepackage{epstopdf}
\usepackage{multirow}
\usepackage{pifont}
\usepackage{caption}
\usepackage{subcaption}
\usepackage{xcolor}

\newcommand{\alphabet}{\Sigma_{a}}

\newcommand{\AP}{\mathsf{AP}} 

\newcommand{\always}{\Box}
\newcommand{\eventually}{\Diamond}
\newcommand{\until}{\mathbin{\sf U}}

\newcommand{\nex}{\mathord{\bigcirc}}

\newcommand{\word}{\boldsymbol{\omega}}

\newcommand{\Lab}{\textsf{L}}   % Labeling of an output
\newcommand{\R}{\mathbb{R}}
\newcommand{\res}{\mathbf{r}}

\newcommand{\tbf}[1]{\textbf{#1}}
\newcommand{\mbf}[1]{\mathbf{#1}}
\DeclareMathOperator*{\U}{\mathcal{U}}
\DeclareMathOperator*{\W}{\mathcal{W}}
\newcommand{\wbar}{\overline{w}}

\newtheorem{theorem}{Theorem}[section]

\newtheorem{assumption}{Assumption}

\newtheorem{definition}[theorem]{Definition}
\newtheorem{lemma}[theorem]{Lemma}
\newtheorem{remark}[theorem]{Remark}
\newtheorem{problem}[theorem]{Problem}
\newtheorem{proof}[theorem]{Proof}

\title{Energetic Resilience under Temporal Logic Specifications
\thanks{The work of Ratnangshu Das was supported by the Prime Minister’s Research Fellowship from the Ministry of Education, Government of India. The work of Pushpak Jagtap was supported by the AI \& Robotics Technology Park, the Google Research Grant and the SERB Startup Research Grant.}%
\thanks{The work of Ram Padmanabhan and Melkior Ornik was supported in part by the Air Force Office of Scientific Research under Grant FA9550-23-1-0131 and in part by the NASA University Leadership Initiative under Grant 80NSSC22M0070.}%
}

\author{
 Ratnangshu Das \thanks{Authors contributed equally.} \\
  Robert Bosch Centre for Cyber-Physical Systems\\
  IISc, Bengaluru, India\\
  \texttt{ratnangshud@iisc.ac.in} \\
   \And
 Ram Padmanabhan $^\ddag$ \\
  University of Illinois Urbana-Champaign, Urbana IL, USA\\
  \texttt{ramp3@illinois.edu} \\
   \And
 Melkior Ornik  \\
  University of Illinois Urbana-Champaign, Urbana IL, USA\\
  \texttt{mornik@illinois.edu} \\
   \And
 Pushpak Jagtap \\
  Robert Bosch Centre for Cyber-Physical Systems\\
  IISc, Bengaluru, India\\
  \texttt{pushpak@iisc.ac.in} \\
}

\begin{document}
\maketitle
\setcounter{footnote}{0}

\begin{abstract}
In environments with uncertainties or undesirable influences, control systems can require additional energy to achieve their task while remaining resilient to these influences. In this paper, we present an energetic resilience metric that quantifies the maximal additional energy used by a system under undesired effects, while satisfying complex specifications encoded through temporal logic. We prove that this metric satisfies properties that enable its computation even for compositions of these specifications, thus allowing considerations of sequential reachability and safety tasks. For specifications related to finite-horizon reachability and safety, we describe how synthesizing a control input and computing this metric reduces to solving efficient quadratic programs. Two case studies on a fighter-jet model and a planar mobile robot illustrate how the synthesized control inputs satisfy given specifications despite undesired and potentially adversarial effects. Further, we demonstrate how the energetic resilience metric varies with the initial state as well as the magnitude of undesired effects.
\end{abstract}

\section{Introduction} \label{sec:Introduction}
The quality of \emph{resilience} of a control system informally refers to its ability to continue achieving its specifications, despite the effect of undesirable influences from its environment. Such specifications include reachability, safety, optimality with respect to a specified criterion, or combinations of the above. Undesired effects seek to prevent the achievement of these objectives, and can include simple bounded perturbations \cite{Yan23}, changes in input structure \cite{BO23} or even abrupt changes in dynamics through system faults \cite{AH19}. Resilience is a key factor in a variety of practical areas including infrastructure \cite{rehak19}, cyber-physical systems and networks \cite{PXG20, ZZ24}, and aircraft and spacecraft \cite{HLXZ21}. 

Linear temporal logic (LTL) \cite{BK08} is often used to formally encode complex specifications such as safety and set reachability for control systems. Under these specifications, an important task is to provide a quantitative measure of how resilient a given system is. The objective of this paper is to introduce a resilience metric which seeks to quantify the additional energy used by inputs of a control system to achieve LTL specifications, under undesirable influences.

Existing work in quantifying resilience under temporal logic specifications has generally involved computing robustness or uncertainty thresholds under which the specifications continue to be satisfied. For instance, \cite{FMPS19} and \cite{IHL23} use the notion of robustness for stochastic systems with respect to signal temporal logic (STL) specifications for resilience quantification and control synthesis. Along similar lines, satisfying specifications in a robust manner under a given set of disturbances has also been studied \cite{Liu24, Zhang24}. The works of Saoud \emph{et al.} \cite{SJS25, monir2025computation} define resilience as the maximum disturbance that can be tolerated while still satisfying a given temporal logic specification. In \cite{ait2025maximally}, the authors further address the synthesis of controllers that maximize this notion of resilience. However, these works do not quantify the additional control effort required to satisfy the specification under such disturbances.

The notion of control energy is an especially useful one for practical systems. The amount of fuel expended by a vehicular system can be closely related to energy in the control signal \cite{Na20}. Designing controllers for fuel economy is a well-studied topic, including studies in autonomous vehicles \cite{MS16}, aircraft \cite{Elias22} and hybrid vehicles \cite{HV1}. Quantifying resilience through considerations of control energy generally involves comparing the energy used by all inputs of the system under undesirable effects to the energy used under nominal behavior. Such a comparison is performed in \cite{PBDO24} under simple disturbances with no input constraints, as well as in \cite{PO25a, PO26a} under actuator malfunctions. However, these works consider very simple specifications of reaching a specified target state in finite-time, and do not involve safety or more complex sequential reachability specifications.

Under these considerations, in this paper, we present a resilience metric that quantifies the maximal additional energy used by the inputs of a control system under undesirable influence, compared to its nominal behavior, under temporal logic specifications. This metric is termed \emph{energetic resilience}, and we consider undesirable influences that can be seen either as an exogenous disturbance or uncontrolled, adversarial inputs. We present the notions of control energy under both nominal and malfunctioning behavior, under constraints on both the input and the exogenous quantity. We derive basic compositional properties of the energetic resilience metric, and for certain classes of specifications, we demonstrate how the energetic resilience metric can be computed efficiently through convex programs. We illustrate the applicability of this framework on two case studies, a fighter-jet model and a planar mobile robot. These examples demonstrate how complex tasks in practical systems are achieved by control inputs synthesized from the convex programs. We also show how the energetic resilience varies with initial state as well as magnitude of undesired effects for the planar mobile robot satisfying a reachability specification. In contrast to \cite{PO25a, PO26a}, which consider simpler task structures, we address significantly more complex temporal logic specifications and provide computational methods for evaluating energetic resilience. Moreover, unlike existing resilience metrics \cite{SJS23, SJS25, IHL23, Zhang24}, our formulation explicitly captures the role of control energy. Notably, \cite{SJS23, SJS25} focus on systems without control inputs, whereas we quantify the additional control effort required to achieve task satisfaction under disturbances.

The remainder of this paper is organized as follows. In Section~\ref{sec:Preliminaries}, we setup the problem, providing background on LTL specifications. In Section~\ref{sec:Metrics}, we introduce the energetic resilience metric and present some relevant properties of this metric in the context of LTL specifications. In Section~\ref{sec:Computation}, we describe how this metric can be computed using convex programs for different classes of LTL specifications, as well as compositions of such specifications. In Section~\ref{sec:Examples}, we present two illustrative examples that compute the additional energy used under undesirable effects, to achieve complex specifications. We also show how the energetic resilience metric varies with the initial state and magnitude of undesired effects, and how this metric is used to characterize the additional energy required for a given specification.

\tbf{Notation:} The $p$-norm of a vector $x \in \mathbb{R}^n$ is given by $\|x\|_p \coloneqq \left(\sum_{i = 1}^{n} |x_i|^p\right)^{1/p}$, with $\|x\|_{\infty} \coloneqq \sup_i |x_i|$. For a sequence $\{z_t\}$ with each $z_t \in \mathbb{R}^n$, the $\ell_2$-norm is defined by $\|z\|_{\ell_2} \coloneqq \sqrt{\sum_{t} \|z_t\|_{2}^{2}}$. For two matrices $A$ and $B$, $A\otimes B$ denotes the Kronecker product between these matrices. The vector $\mbf{1}_n$ denotes the vector of ones in $\mathbb{R}^n$.

\section{Preliminaries} \label{sec:Preliminaries}
%\tb{Includes system description, LTL basics, defining relevant energies (nominal/malfunctioning) and energetic resilience.}

We consider discrete-time systems evolving on a state space $X \subseteq \mathbb{R}^n$, over a finite-horizon indexed by $t = 0, \ldots, N$, according to the dynamics
\begin{equation} \label{eq:System}
x_{t+1} = Ax_t + B_uu_t + B_ww_t,
\end{equation}
where $x_t \in X$ is the state, $u_t \in \mathbb{R}^m$ is the control input, and $w_t \in \mathbb{R}^p$ is an exogenous, uncontrolled quantity that can be seen either as a disturbance or an adversarial input. We assume $u_t$ and $w_t$ are constrained to lie in convex polyhedral sets, or \emph{polytopes} \cite{Ziegler95} denoted $\U$ and $\W$, which are represented as
\begin{align}
	\U &\coloneqq \{u_t \in \mathbb{R}^m: H_{\U} u_t \leq h_{\U}\}, \label{eq:SetU} \\
	\W &\coloneqq \{w_t \in \mathbb{R}^p: \|w_t\|_{\infty} \leq \wbar\}. \label{eq:SetW}
\end{align}
We assume the pair $(A,B_u)$ is controllable. The specifications we consider throughout this paper are described through the framework of linear temporal logic (LTL) \cite{BK08}.

\subsection{Linear Temporal Logic Specifications} \label{sec:LTLF}

Linear Temporal Logic (LTL) is a formal language used to describe how a system should behave over time. It is built using a set of atomic propositions and combined using logical and temporal operators \cite{BK08}.

Let $\AP$ be a finite set of atomic propositions, and define the alphabet as $\alphabet := \AP$. Each letter $\alpha \in \alphabet$ evaluates a subset of the atomic propositions as true.

In this work, we consider LTL over finite sequences, known as LTL$_F$ \cite{de2013linear}. A finite word of length $N \in \mathbb{N}$ is written as
$$\word = (\word_0, \word_1, \ldots, \word_{N-1}) \in \alphabet^N,$$
where $\word_i \in \alphabet$ represents the letter of the word at $i$th instance.

We connect system trajectories to words using a measurable labeling function $\Lab : X \rightarrow \alphabet$, which assigns letters $\alpha =\Lab(x)$ to state $x\in X$. Any finite trajectory $w_x = (x_0,x_1,\ldots,x_{N-1})$ is mapped to the set of finite words $\alphabet^{N}$, as
$$\word = \Lab(w_x) := (\Lab(x_0), \Lab(x_1), \ldots, \Lab(x_{N-1})).$$

\begin{definition}
\label{def:LTL}
An LTL$_F$ formula over a set of atomic propositions $\AP$ is constructed recursively as
\begin{equation*}
	% \label{eq:PNF}
\psi ::=  \textsf{true}  |  p  |  \neg \psi
     | \psi_1 \wedge \psi_2  |  \psi_1 \vee \psi_2  |  \nex \psi  |  \psi_1\until \psi_2  |  \square\psi | \lozenge\psi,
\end{equation*}
where $p \in \AP$ and $\psi, \psi_1, \psi_2$ are LTL$_F$ formulas.
\end{definition}

Given a finite word $\word$ of length $N$ and an LTL$_F$ formula $\psi$, we define when $\psi$ is satisfied at the $i^{\text{th}}$ step $(i < N)$ (denoted $\word_{i}\models\psi$) as follows:
\begin{itemize}
\item $\word_i \models \textsf{true}$ always holds, and $\word_i \models \textsf{false}$ never holds.
\item $ \word_i\vDash p$  for $ p\in \AP$ holds if $p \in\word_{i}$.
\item A negation, $\word_i\vDash\neg p$, holds if $ \word_i\nvDash p$.
\item A logical conjunction, $\word_i\vDash \psi_1\wedge\psi_2$, holds
if $ \word_i\vDash \psi_1$ and $ \word_i\vDash \psi_2$.
\item A logical disjunction, $\word_i\vDash \psi_1\vee\psi_2$, holds
if $ \word_i\vDash \psi_1$ or $ \word_i\vDash \psi_2$.
\item A temporal next operator, $\word_i\vDash\nex\psi$, holds if $\word_{i+1}\vDash \psi$. Similarly, for $0 \leq j < N-i$, $\word_i\vDash\nex^j\psi$, holds if $\word_{i+j}\vDash \psi$. 
\item A temporal until operator, $\word_i\vDash \psi_1\until\psi_2$, holds if for some $m$ such that $i\leq m< N$, we have $\word_m\vDash\psi_2$ and for all $i\leq k< m$, we have $\word_k\vDash\psi_1$.
\item A temporal always operator, $\word_i\vDash \square\psi$, holds if for all $m$ such that $i\leq m< N$, we have $\word_m\vDash \psi$. Similarly, for $0 \leq j < N-i$, $\word_i\vDash \square^j\psi$ holds if for all $i \leq m \leq i+j$, we have $\word_m\vDash \psi$.
\item A temporal eventually operator, $\word_i\vDash \lozenge\psi$, holds if for some $m$ such that $i\leq m< N$, we have $\word_m\vDash \psi$. Similarly, for $0 \leq j < N-i$, $\word_i\vDash \lozenge^j\psi$, holds if for some $m$ such that $i\leq m< i+j$, we have $\word_{m}\vDash \psi$. 
\end{itemize}
An LTL$_F$ formula $\psi$ is true for $\omega$, $w \models \psi$, iff $\omega_0 \models \psi$. For a trajectory $\omega_x = (x_0, \ldots, x_{N-1}) \in X^N$, we say that $\omega_x \models \psi$ if for $\omega = \Lab(\omega_x) := (\Lab(x_0), \ldots, \Lab(x_{N-1}))$, , we have $\omega \models \psi$. Similarly, for a set of trajectories $\mathcal{X} \subseteq X^N$, we say that $\mathcal{X} \models \psi$, if $\omega_x \models \psi$ for all $\omega_x \in \mathcal{X}.$

\section{Metrics for Energy Quantification} \label{sec:Metrics}
In this section, we introduce metrics for resilience quantification based on the energy used by the control input $u_t$. Let $u$ be the sequence $\{u_t\}$ for $t = 0, \ldots, N-1$, and $\xi_u(x_0, \W)$ be the set of trajectories of any length from an initial state $x_0$, under a given input sequence $u$, and $w_t \in \W$ in \eqref{eq:System}, i.e., 
\begin{align}
\xi_u(x_0, \W) = \big\{(x_0, x_1, x_2, \ldots) : x_{t+1} = Ax_t + B_uu_t + B_ww_t,\
u = (u_0, u_1, \ldots), \ w_t \in \W\big\}. \label{eq:Trajectories}
\end{align}
Clearly, $\xi_u(x_0, \{0\})$ denotes the set of trajectories under \emph{nominal} dynamics of the form
\begin{equation} \label{eq:Nominal}
	x_{t+1} = Ax_t + B_uu_t.
\end{equation}
Given an LTL$_F$ specification $\psi$, and an initial state $x_0$, we define the {nominal energy} as follows.

\begin{definition}[Nominal Energy]
The \emph{nominal energy} $E_{nom}(x_0; \psi)$ is defined as the minimum energy in the control input used by the dynamics \eqref{eq:Nominal} to achieve an LTL$_F$ specification $\psi$ from a given initial state $x_0$, i.e.,
\begin{equation} \label{eq:EN_def}
	E_{nom}(x_0; \psi) \coloneqq \inf_{u} \left\{\|u\|_{\ell_2}^{2} \text{ s.t. } \xi_u(x_0, \{0\}) \models \psi, u_t \in \U\right\}.
\end{equation}
\end{definition}

Similarly, we define the {malfunctioning energy} as follows.

\begin{definition}[Malfunctioning Energy]
The \emph{malfunctioning energy} $E_{mal}(x_0; \psi)$ is defined as the minimum energy in the control input used by the dynamics \eqref{eq:System} to achieve a specification $\psi$ from a given initial state $x_0$, when $w_t$ is chosen to maximize this quantity, i.e.,
\begin{equation} \label{eq:EM_def}
	E_{mal}(x_0; \psi)\!\coloneqq\!\sup_{w \in \W} \left\{\inf_{u} \left\{\|u\|_{\ell_2}^{2} \text{ s.t. } \xi_u(x_0, \W) \models \psi, u_t \in \U\right\}\right\}.
\end{equation}
\end{definition}
The distinguishing feature between these energies is the underlying dynamics that are used to formulate them. Moreover, the quantity $w_t$ seeks to maximize the malfunctioning energy. We assume the specification $\psi$ is \emph{non-trivial}, in the sense that satisfying $\psi$ requires nonzero energy in the control input $u_t$ under either dynamics \eqref{eq:System} or \eqref{eq:Nominal}. Using these energies, we now define an \emph{energetic resilience} metric for a given initial state $x_0$ under a specification $\psi$.

\begin{definition}[Energetic Resilience]
The \emph{energetic resilience} $\res(x_0; \psi)$ of a given initial state $x_0$ under the LTL$_F$ specification $\psi$ is the difference between the malfunctioning and nominal energies, i.e., 
\begin{equation} \label{eq:rA_def}
\res(x_0; \psi) \coloneqq E_{mal}(x_0; \psi) - E_{nom}(x_0; \psi).
\end{equation}
\end{definition}

Clearly, for a given initial state $x_0$ and specification $\psi$, $\res(x_0;\psi)$ quantifies the additional energy required to satisfy $\psi$. We remark that a similar metric was posed in prior work \cite{PO25a, PO26a} but under significantly simpler specifications, which only required achieving a target state $x_{tg}$ in finite time. Such simplicity in specifications allowed deriving analytical bounds on the different energies defined above, as well as the energetic resilience metric. In contrast, we consider more complex specifications encoded through $\psi$ in this paper. As we show in the next section, obtaining the energies defined above as well as the energetic resilience metric reduces to solving a set of efficient convex programs, allowing us to characterize the maximal energy required to execute more complex tasks. We now present some key properties of the metric \eqref{eq:rA_def}, which can be used to compute the metric while satisfying compositions of more elementary temporal logic specifications.

\begin{lemma}[Compositional Properties of Energetic Resilience]
\label{lem:energy_composition_full}
Consider the system \eqref{eq:System}, an initial state $x_0 \in X \subset \mathbb{R}^n$, and an LTL$_F$ specification $\psi$. Let $E_{nom}^*(x_0;\psi)$ and $E_{mal}^*(x_0;\psi)$ be the nominal and malfunctioning energies in \eqref{eq:EN_def}–\eqref{eq:EM_def}. Then, the following properties hold for specifications $\psi_1$ and $\psi_2$ whose task horizons do not overlap:%non-overlapping task horizons:
\begin{enumerate}
    \item Trivial specification: For $\psi = \mathsf{true}$,
    $\res(x_0; \psi) = 0$.
    
    \item Conjunction: 
    $\res(x_0;\psi_1 \wedge \psi_2) \leq \res(x_0;\psi_1) + \res(x_0;\psi_2)$.
    
    \item Disjunction: 
    $\res(x_0;\psi_1 \!\vee\! \psi_2) \!\leq\! \min \{\res(x_0;\psi_1), \!\res(x_0;\psi_2)\}$.

    \item Set extension: 
    For $X_0 \!\subseteq\! X$,
    $\res(X_0;\psi) \leq \sup_{x_0 \in X_0} \res(x_0;\psi)$.
\end{enumerate}
\end{lemma}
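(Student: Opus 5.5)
The plan is to argue directly from the definitions \eqref{eq:EN_def}, \eqref{eq:EM_def} and \eqref{eq:rA_def}, treating each item separately. The key structural fact is what ``non-overlapping task horizons'' buys us. Then $\psi_1$ constrains only the states on an index window $T_1$ and $\psi_2$ only those on a disjoint window $T_2$. So the input sequence splits as $u=(u^{(1)},u^{(2)})$, with the relevant inputs acting on each window, and the energy is additive: $\|u\|_{\ell_2}^2=\|u^{(1)}\|_{\ell_2}^2+\|u^{(2)}\|_{\ell_2}^2$. The constraint $u_t\in\U$ is imposed pointwise in $t$, so it also splits across the windows.

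For item 1, $\xi_u(x_0,\W)\models \mathsf{true}$ holds for every $u$, including $u\equiv 0\in\U^N$ (I will need $0\in\U$, which I would state explicitly). Hence both the infimum in \eqref{eq:EN_def} and the inner infimum in \eqref{eq:EM_def} equal $0$, and the supremum over $w$ of $0$ is $0$. Therefore $\res(x_0;\mathsf{true})=0$.

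For item 2, I will bound the two energies in opposite directions.

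\emph{Malfunctioning energy, upper bound.} For a fixed $w$, concatenate a feasible $u^{(1)}$ for $\psi_1$ with a feasible $u^{(2)}$ for $\psi_2$, the latter computed from the state at the start of $T_2$. This is admissible for $\psi_1\wedge\psi_2$ because the windows are disjoint. Taking infima gives $\inf(\cdot)_{\psi_1\wedge\psi_2}\le \inf(\cdot)_{\psi_1}+\inf(\cdot)_{\psi_2}$. Then $\sup_w(f_1+f_2)\le\sup_w f_1+\sup_w f_2$ yields $E_{mal}(x_0;\psi_1\wedge\psi_2)\le E_{mal}(x_0;\psi_1)+E_{mal}(x_0;\psi_2)$.

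\emph{Nominal energy, lower bound.} Any $u$ feasible for the conjunction restricts to a sequence feasible for each conjunct on its own window, and energy is additive. Hence $E_{nom}(x_0;\psi_1\wedge\psi_2)\ge E_{nom}(x_0;\psi_1)+E_{nom}(x_0;\psi_2)$.

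Subtracting the two bounds gives item 2. The delicate point is that the energy of $\psi_2$ is naturally evaluated from the state at the start of $T_2$, which depends on $u^{(1)}$ and $w$. I will fix the convention, which I take to be implicit in the statement, that $\res(x_0;\psi_2)$ is understood with $\psi_2$'s horizon embedded in the full trajectory from $x_0$. Under this convention both displayed inequalities hold verbatim.

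Item 3 is the step I expect to be the main obstacle. A sequence feasible for either disjunct is feasible for $\psi_1\vee\psi_2$, so $E_{mal}(x_0;\psi_1\vee\psi_2)\le\min_i E_{mal}(x_0;\psi_i)$. For the nominal energy, exact branch selection gives $E_{nom}(x_0;\psi_1\vee\psi_2)=\min_i E_{nom}(x_0;\psi_i)$. However, the difference of two minima is not in general bounded by the minimum of the differences. My first attempt is therefore to let $i^\ast$ be the branch attaining the nominal minimum and bound $E_{mal}(x_0;\psi_1\vee\psi_2)\le E_{mal}(x_0;\psi_{i^\ast})$, which gives $\res(x_0;\psi_1\vee\psi_2)\le\res(x_0;\psi_{i^\ast})$. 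To upgrade this to the minimum over $i$, I would add (and flag) the assumption that the same branch minimizes both energies. Alternatively, I would read the disjunction as the designer committing to one branch in advance, in which case the bound for that chosen branch is exactly the claim.

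For item 4, I will define $\res(X_0;\psi)$ as the worst case of the energies over $x_0\in X_0$, that is, a supremum over $x_0$ of $E_{mal}(x_0;\psi)-E_{nom}(x_0;\psi)$ (or a robust version of it). The bound then follows immediately: the supremum of a pointwise difference is at most the supremum of the pointwise values $\res(x_0;\psi)$.
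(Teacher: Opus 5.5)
Your diagnosis of item 2 is sharper than the paper's. Subtracting requires an upper bound on $E_{mal}(x_0;\psi_1\wedge\psi_2)$ together with a \emph{lower} bound on $E_{nom}(x_0;\psi_1\wedge\psi_2)$. The paper derives only the concatenation upper bound and applies it to both energies. However, the lower bound you propose is false. Under the convention you fix ($\psi_2$ embedded in the trajectory from $x_0$), the states in $T_2$ depend on $u^{(1)}$ through the terms $A^{t-s-1}B_uu_s$. So the restriction $(0,u^{(2)})$ of a conjunction-feasible input need not satisfy $\psi_2$, and the optimal input for $\psi_2$ alone generally uses the $T_1$ inputs too. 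Your malfunctioning bound, meanwhile, silently switches conventions (``computed from the state at the start of $T_2$''). Under that convention the $\psi_2$ cost depends on $u^{(1)}$ and is not $E_{mal}(x_0;\psi_2)$. Concretely, take $x_{t+1}=x_t+u_t+w_t$, $x_0=0$, a non-binding $\U$, $0<\overline{w}\le\tfrac14$, and $\psi_1=\nex^1[1,2]$. For $\psi_2=\nex^2[1,2]$ you get $E_{nom}(x_0;\psi_1\wedge\psi_2)=1<\tfrac32=E_{nom}(x_0;\psi_1)+E_{nom}(x_0;\psi_2)$. For $\psi_2=\nex^2[-2,-1]$, the conjunction forces $u_0\ge 1+\overline{w}$ and $u_0+u_1\le -1-2\overline{w}$. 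Then $E_{nom}(x_0;\psi_1\wedge\psi_2)=5>\tfrac32$, so your upper bound fails as well. Moreover, $\res(x_0;\psi_1\wedge\psi_2)=14\overline{w}+10\overline{w}^2>4\overline{w}+3\overline{w}^2=\res(x_0;\psi_1)+\res(x_0;\psi_2)$. So item 2 itself is false under your reading. It cannot be proved without an extra hypothesis under which the two sub-problems separate exactly, e.g.\ $\psi_1$ and $\psi_2$ constrain dynamically decoupled subsystems with separate inputs. In that case both energies add, and item 2 holds with equality.

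On item 3 you again correctly flag what the paper glosses over: it asserts the bound ``directly'' from the two min formulas. Your bound $\res(x_0;\psi_1\vee\psi_2)\le\res(x_0;\psi_{i^\ast})$, with $i^\ast$ the nominal minimizer, is valid. But the assumption you add does not upgrade it to the minimum. If one branch minimizes both energies, you still only get $\res(x_0;\psi_{i^\ast})$, which need not be the smaller resilience. Take $x\in\mathbb{R}^2$, $x_{t+1}=x_t+u_t+B_ww_t$ with $B_w=(1,0)^\top$, $x_0=0$, $\overline{w}=0.1$. Let $\psi_1$ require the first coordinate of $x_1$ to lie in $[1,5]$, and $\psi_2$ require the second coordinate of $x_2$ to lie in $[1.6,5]$. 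Then $(E_{nom},E_{mal})$ is $(1,1.21)$ for $\psi_1$ and $(1.28,1.28)$ for $\psi_2$, so branch $1$ minimizes both energies. Yet $\res(x_0;\psi_1\vee\psi_2)=0.21>0=\min\{\res(x_0;\psi_1),\res(x_0;\psi_2)\}$. The hypothesis you actually need is that the nominal minimizer also has the smaller resilience. Committing to a branch in advance would redefine the disjunction's energies, which changes the statement rather than proving it. Items 1 and 4 are fine and match the paper's near-definitional arguments, with $0\in\U$ made explicit.
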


\begin{proof}
The proof follows from the definition of energetic resilience in Equation~\eqref{eq:rA_def} together with the compositional properties of the nominal and malfunctioning energies.

\medskip

\noindent 1)
If $\psi = \mathsf{true}$, then the specification is satisfied by any trajectory. In particular, choosing $u_t = 0$ for all $t$ is feasible under both the nominal and malfunctioning dynamics. Hence,
$$
E_{nom}^*(x_0;\psi) = E_{mal}^*(x_0;\psi) = 0,
\implies
\res(x_0;\psi) = 0.
$$
We remark that this statement implies only that the metric defined in \eqref{eq:rA_def} equals zero, and does not imply the system is \emph{not resilient}.

\medskip

\noindent 2)
Since the task horizons of $\psi_1$ and $\psi_2$ do not overlap, feasible control inputs for the two sub-tasks can be concatenated in time. 
% For $\psi = \psi_1 \wedge \psi_2$
% the malfunctioning and nominal energies for $\psi_1$, $\psi_2$ and $\psi_1 \wedge \psi_2$ satisfy
% \begin{align*}
% E_{mal}^*(x_0;\psi_i) & \geq E_{nom}^*(x_0;\psi_i) \geq 0 \quad \text{for } (i=1,2),\\
% E_{mal}^*(x_0;\psi_1 \wedge \psi_2) &\leq E_{mal}^*(x_0;\psi_1) + E_{mal}^*(x_0;\psi_2),\\
% E_{nom}^*(x_0;\psi_1 \wedge \psi_2) &\leq E_{nom}^*(x_0;\psi_1) + E_{nom}^*(x_0;\psi_2),\\
% E_{mal}^*(x_0;\psi_1 \wedge \psi_2) &\geq E_{nom}^*(x_0;\psi_1 \wedge \psi_2) \geq 0.
% \end{align*}
Therefore, the total control energy required to satisfy the conjunction are upper bounded by the sum of the respective energies required for the individual sub-tasks
\begin{align*}
\res(x_0; \psi_1 \wedge \psi_2) &= E_{mal}^*(x_0;\psi_1 \wedge \psi_2) - E_{nom}^*(x_0;\psi_1 \wedge \psi_2) \\
&\leq \big(E_{mal}^*(x_0;\psi_1) + E_{mal}^*(x_0;\psi_2)\big) - \big(E_{nom}^*(x_0;\psi_1) + E_{nom}^*(x_0;\psi_2)\big) \\
&= \big(E_{mal}^*(x_0;\psi_1) - E_{nom}^*(x_0;\psi_1)\big) + \big(E_{mal}^*(x_0;\psi_2) - E_{nom}^*(x_0;\psi_2)\big)\\
&=\res(x_0; \psi_1) + \res(x_0; \psi_2).
\end{align*}

\medskip

\noindent 3) Disjunction:
Let $\psi = \psi_1 \vee \psi_2$. Since satisfying $\psi$ only requires satisfying one of the two sub-specifications, the feasible set of control input for $\psi$ is the union of the feasible control sets for $\psi_1$ and $\psi_2$. Hence, the malfunctioning and nominal energies for $\psi_1 \vee \psi_2$ are given by the minimum of the corresponding energies for the two sub-tasks.
\begin{align*}
    E_{mal}^*(x_0;\psi_1 \vee \psi_2) &= \min\{E_{mal}^*(x_0;\psi_1),\,E_{mal}^*(x_0;\psi_2)\},    \\
    E_{nom}^*(x_0;\psi_1 \vee \psi_2) &= \min\{E_{nom}^*(x_0;\psi_1),\,E_{nom}^*(x_0;\psi_2)\}.
\end{align*}
Therefore, this directly implies
\begin{align*}
\res(x_0;\psi_1 \vee \psi_2)
% &=
% \min\{E_{mal}^*(x_0;\psi_1),E_{mal}^*(x_0;\psi_2)\} \\
% &-\min\{E_{nom}^*(x_0;\psi_1),E_{nom}^*(x_0;\psi_2)\} \\
&\le
\min\{\res(x_0;\psi_1),\,\res(x_0;\psi_2)\}.
\end{align*}

\medskip

\noindent {4) Set extension:}
Let $X_0 \subseteq X$. By definition, the energetic resilience over a set is obtained by aggregating the pointwise resilience values over all $x_0 \in X_0$. Since every pointwise value satisfies
$\res(x_0;\psi) \le \sup_{x_0 \in X_0} \res(x_0;\psi),$
it follows directly that
$$
\res(X_0;\psi) \le \sup_{x_0 \in X_0} \res(x_0;\psi).
$$

This completes the proof.
\end{proof}

\begin{remark}
The above results provide upper bounds on energetic resilience under compositions. In general, exact expressions for composed tasks are difficult to obtain due to coupling between control inputs across sub-tasks. 
% For example, the energy associated with conjunction can be written as
% $$E(x_0; \psi_1 \wedge \psi_2) = \inf_{u} \left\{ \|u\|^2 \;\middle|\; u \text{ satisfies both } \psi_1 \text{ and } \psi_2 \right\},$$
% which highlights the inherent coupling in the optimization problem. 
Deriving tighter bounds or exact metrics for composed specifications, potentially using decomposition techniques, is an interesting direction for future work.
\end{remark}

\section{Computing Energetic Resilience through Convex Programs} \label{sec:Computation}
%\tb{Different convex programs for different specifications like safety, reachability. Discussing how to chain specifications together, and discussing how to evaluate energetic resilience given an initial condition. I can work on this section.}

In this section, we explain how energetic resilience metrics can be computed exactly for certain classes of specifications, such as exact-time reachability, finite-horizon safety, and finite-horizon reachability of convex polytopes $\Gamma$:
\begin{itemize}
    \item $\psi = \nex^N \Gamma$: Exact-time reachability requires the system to reach a target set $\Gamma$ at a specific time $N$.

    \item $\psi = \lozenge^N \Gamma$: Finite-horizon reachability requires the system to reach $\Gamma$ at some time on or before $N$.

    \item $\psi = \square^N \Gamma$: Finite-horizon safety requires the system to remain within the set $\Gamma$ at all times up to time $N$.
\end{itemize}
While similar specifications are considered in \cite{SJS25}, that work does not address systems with control inputs. Unlike \eqref{eq:rA_def}, there is no notion of energy in the control signal in \cite{SJS25}, and indeed, the resilience of a system was defined only through the disturbance. In contrast, our notion of resilience is based on additional energy used in the control signal, resulting in different objectives in the resulting computations.

\subsection{Exact-Time Reachability}
We discuss how the nominal \eqref{eq:EN_def} and malfunctioning \eqref{eq:EM_def} energies are computed for the specification $\psi = \nex^N \Gamma$, i.e., ensuring $x_N \in \Gamma$ for a specific time $N$. Let the polytope $\Gamma \coloneqq \{y: H_{\Gamma}y \leq h_{\Gamma}\}$. In the nominal dynamics \eqref{eq:Nominal}, we know
\begin{equation} \label{eq:xN_nom}
x_N = A^Nx_0 + A^{N-1}B_uu_0 + \ldots + AB_uu_{N-2} + B_uu_{N-1},
\end{equation}
and we must have $H_{\Gamma}x_N \leq h_{\Gamma}$. This requirement can be rewritten as
\begin{equation} \label{eq:xN_1}
	H_{\Gamma}A^Nx_0 + FB_u \mbf{u} \leq h_{\Gamma},
\end{equation}
using \eqref{eq:xN_nom}, where $F \coloneqq [H_{\Gamma}A^{N-1} ~~ \ldots ~~ H_{\Gamma}A ~~ H_{\Gamma}]$ and $\mbf{u} = [u_{0}^{\top}, \ldots, u_{N-1}^{\top}]^{\top}$. We also have the constraint $u_t \in \U$ for all $t$ \eqref{eq:SetU}. Then, the nominal energy \eqref{eq:EN_def} is given by the following quadratic program:
\begin{align}
E_{nom}(x_0; \psi) = &\min_{\mbf{u}} \|\mbf{u}\|_{2}^{2} \nonumber \\
\text{s.t. }~ &FB_u\mbf{u} \leq h_{\Gamma} - H_{\Gamma}A^Nx_0, \nonumber \\
&H_{\U}u_t \leq h_{\U} \text{ for all $t$}. \label{eq:EN_exreach}
\end{align}
Similarly, under the original dynamics \eqref{eq:System}, the state at time $N$ is given by
\begin{align}
x_N &= A^Nx_0 + A^{N-1}B_uu_0 + \ldots + AB_uu_{N-2} + B_uu_{N-1} \nonumber \\
&+ A^{N-1}B_ww_0 + \ldots + AB_ww_{N-2} + B_ww_{N-1}. \label{eq:xN_malf}
\end{align}
Then, the requirement $H_{\Gamma}x_N \leq h_{\Gamma}$ can be rewritten as
\begin{equation} \label{eq:xN_malf1}
	H_{\Gamma}A^Nx_0 + FB_u \mbf{u} + FB_w \mbf{w} \leq h_{\Gamma},
\end{equation}
where $\mbf{w} = [w_{0}^{\top}, \ldots, w_{N-1}^{\top}]^{\top}$. Since $\mbf{w}$ seeks to maximize the malfunctioning energy \eqref{eq:EM_def} and the above equation is linear in both $\mbf{u}$ and $\mbf{w}$, we replace $FB_w\mbf{w}$ by its worst-case effect $\|FB_w\|_1\wbar \mbf{1}_n$ in \eqref{eq:xN_malf1} to obtain the constraint
\begin{equation} \label{eq:xN_malf2}
	H_{\Gamma}A^Nx_0 + FB_u \mbf{u} + \|FB_w\|_1 \wbar \mbf{1}_n \leq h_{\Gamma}.
\end{equation}
Then, the malfunctioning energy is given by the following quadratic program:
\begin{align}
E_{mal}(x_0; \psi) = &\min_{\mbf{u}} \|\mbf{u}\|_{2}^{2} \nonumber \\
\text{s.t. }~ &FB_u\mbf{u} \leq h_{\Gamma} - H_{\Gamma}A^Nx_0 - \|FB_w\|_1 \wbar \mbf{1}_n, \nonumber \\
&H_{\U}u_t \leq h_{\U} \text{ for all $t$}. \label{eq:EM_exreach}
\end{align}
Finally, the energetic resilience metric \eqref{eq:rA_def} for exact-time reachability can be obtained using the difference of the two quadratic programs \eqref{eq:EN_exreach} and \eqref{eq:EM_exreach}.

\subsection{Finite-horizon Reachability}
The finite-horizon reachability specification $\psi = \Diamond^N \Gamma$ can be equivalently written as a disjunction between exact-time reachability specifications $\nex^t \Gamma$, i.e.,
$$
\psi = \bigvee_{t=1}^{N} \nex^t \Gamma.
$$
Thus, the control objective reduces to reaching $\Gamma$ at \emph{some} time $t \in \{1,\ldots,N\}$. For each $t$, we can compute the nominal and malfunctioning energies required to satisfy $\nex^t \Gamma$ using \eqref{eq:EN_exreach} and \eqref{eq:EM_exreach}, and we denote these as $E_{N}^{t}(x_0)$ and $E_{M}^{t}(x_0)$, with a slight abuse of notation. Since satisfying $\psi$ requires satisfying at least one of the sub-specifications $\nex^t \Gamma$, the nominal and malfunctioning energies \eqref{eq:EM_def} and \eqref{eq:EN_def} are obtained by selecting the most favorable time index:
\begin{equation} \label{eq:ENEM_fhreach}
E_{nom}(x_0; \psi) = \min_{t} E_{N}^{t}(x_0), \quad
E_{mal}(x_0; \psi) = \min_{t} E_{M}^{t}(x_0).
\end{equation}
Subsequently, the energetic resilience \eqref{eq:rA_def} is computed using the difference of these quantities. We note that the reach time in the nominal and malfunctioning cases need not be the same, since the specification $\psi = \Diamond^N \Gamma$ only specifies an upper limit $N$ on this reach time.

\begin{figure*}[t]
    \centering
    \includegraphics[width=0.9\linewidth]{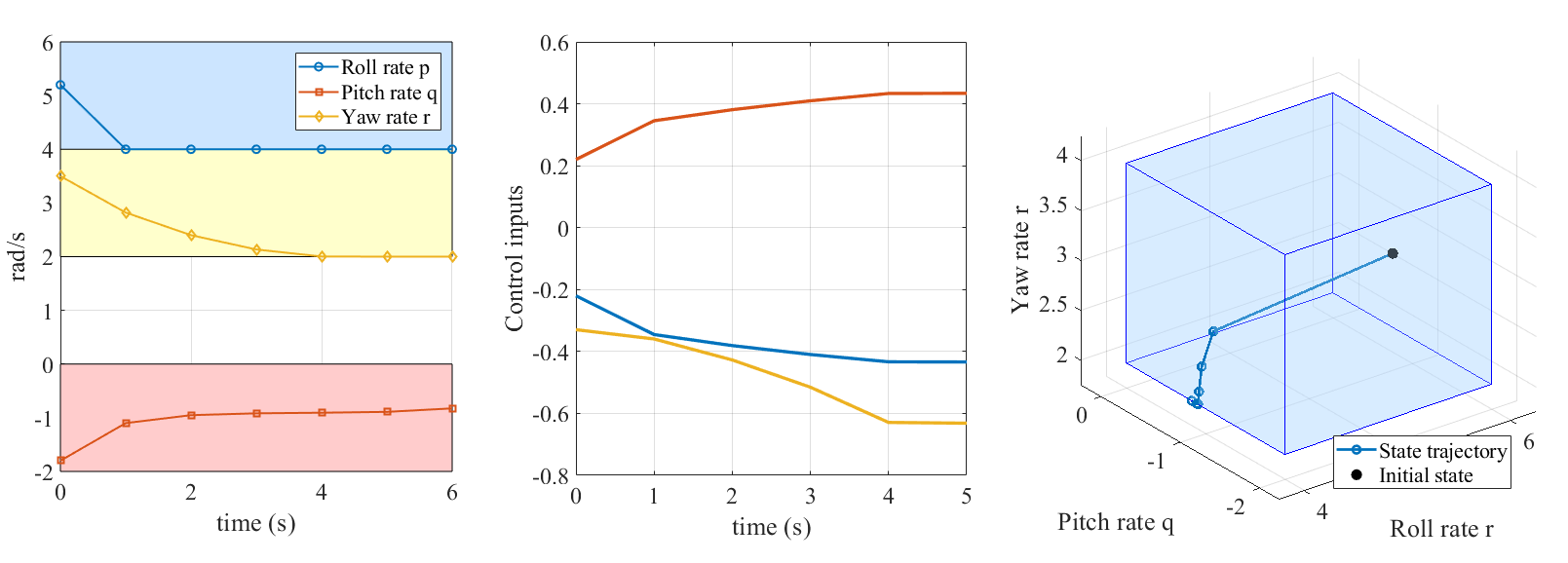}
    \caption{ADMIRE system under the finite-horizon safety specification $\always^6(x_k \in \mathcal{S})$. \emph{Left:} Evolution of the system states over time, showing that the trajectory remains within the prescribed safe bounds throughout the entire horizon. \emph{Middle:} Corresponding control inputs applied to enforce safety. \emph{Right:} Three-dimensional state evolution of the system, illustrating that the trajectory remains inside the safe region over the finite horizon.}
    \label{fig:admire_safety}
\end{figure*}

\subsection{Finite-horizon Safety}
This specification can be written as $x_t \in \Gamma$ for all $t = 1, \ldots, N$, and we implicitly assume $x_0 \in \Gamma$ is satisfied. As before, we consider the polytope $\Gamma = \{y: H_{\Gamma}y \leq h_{\Gamma}\}$. Let $\mbf{x} = [x_{1}^{\top}, \ldots, x_{N}^{\top}]^{\top}$, $\mbf{H_{\Gamma}} = H_{\Gamma}\otimes I_N$ and $\mbf{h_{\Gamma}} = h_{\Gamma}\otimes \mbf{1}_N$. Then, $\psi = \square^N \Gamma$ can be rewritten as
$$
\mbf{H_{\Gamma}x} \leq \mbf{h_{\Gamma}}.
$$
In the nominal case, we know the following hold:
\begin{align}
x_1 &= Ax_0 + B_uu_0, \nonumber \\
x_2 &= A^2x_0 + AB_uu_0 + B_uu_1, \nonumber \\
&\vdots \nonumber \\
x_N &= A^Nx_0 + A^{N-1}B_uu_0 + \ldots + B_uu_{N-1}. \label{eq:xN_nomsaf}
\end{align}
Let 
$$
\mbf{G}_u =
\begin{bmatrix}
B_u & 0 & \ldots & 0 \\
AB_u & B_u & \ldots & 0 \\
\vdots & \vdots & \ddots & \vdots \\
A^{N-1}B_u & A^{N-2}B_u & \ldots & B_u
\end{bmatrix}.
$$%
Then, the specification $\mbf{H_{\Gamma}x} \leq \mbf{h_{\Gamma}}$ is rewritten using \eqref{eq:xN_nomsaf} as
\begin{equation} \label{eq:SafetySpec}
\mbf{H_{\Gamma}} \begin{bmatrix} Ax_0 \\ A^2x_0 \\ \vdots \\ A^Nx_0 \end{bmatrix} + \mbf{H_{\Gamma}} \mbf{G}_u\mbf{u} \leq \mbf{h_{\Gamma}}.
%F_{\textsf{saf}} \otimes B_u \mbf{u} 
\end{equation}
Under this constraint, the nominal energy \eqref{eq:EN_def} is obtained from the following quadratic program:
\begin{align}
E_{nom}(x_0; \psi) = &\min_{\mbf{u}} \|\mbf{u}\|_{2}^{2} \nonumber \\
\text{s.t. }~ &\mbf{H_{\Gamma}} F_{\mathsf{sys}}\mbf{u} \leq \mbf{h_{\Gamma}} - \mbf{H_{\Gamma}} \begin{bmatrix} Ax_0 \\ A^2x_0 \\ \vdots \\ A^Nx_0 \end{bmatrix}, \nonumber \\
&H_{\U}u_t \leq h_{\U} \text{ for all $t$}. \label{eq:EN_saf}
\end{align}
Next, in the malfunctioning case, we know
\begin{align}
x_1 &= Ax_0 + B_uu_0 + B_ww_0, \nonumber \\
x_2 &= A^2x_0 + AB_uu_0 + B_uu_1 + AB_ww_0 + B_ww_1, \nonumber \\
&\vdots \nonumber \\
x_N &= A^Nx_0 + A^{N-1}B_uu_0 + \ldots + B_uu_{N-1} \nonumber \\
&+ A^{N-1}B_ww_0 + \ldots + B_ww_{N-1}. \label{eq:xN_malfsaf}
\end{align}
Using similar arguments to the nominal case above and in the previous specifications, the malfunctioning energy \eqref{eq:EM_def} is obtained from the following quadratic program:
\begin{align}
E_{mal}(x_0; \psi) &= \min_{\mbf{u}} \|\mbf{u}\|_{2}^{2} \nonumber \\
\text{s.t. }~ \mbf{H_{\Gamma}} \mbf{G}_u \mbf{u} &\leq \mbf{h_{\Gamma}} - \mbf{H_{\Gamma}} \begin{bmatrix} Ax_0 \\ A^2x_0 \\ \vdots \\ A^Nx_0 \end{bmatrix} - \|\mbf{H_{\Gamma}} \mbf{G}_w \|_1 \overline{w}\mbf{1}_{Nn} \nonumber \\
H_{\U}u_t &\leq h_{\U} \text{ for all $t$}, \label{eq:EM_saf}
\end{align}
where
$$
\mbf{G}_w =
\begin{bmatrix}
B_w & 0 & \ldots & 0 \\
AB_w & B_w & \ldots & 0 \\
\vdots & \vdots & \ddots & \vdots \\
A^{N-1}B_w & A^{N-2}B_w & \ldots & B_w
\end{bmatrix}.
$$%
Then, the metric \eqref{eq:rA_def} for this specification can be obtained using the difference between \eqref{eq:EM_saf} and \eqref{eq:EN_saf}.

\begin{figure*}[t]
    \centering
    \includegraphics[width=0.9\linewidth]{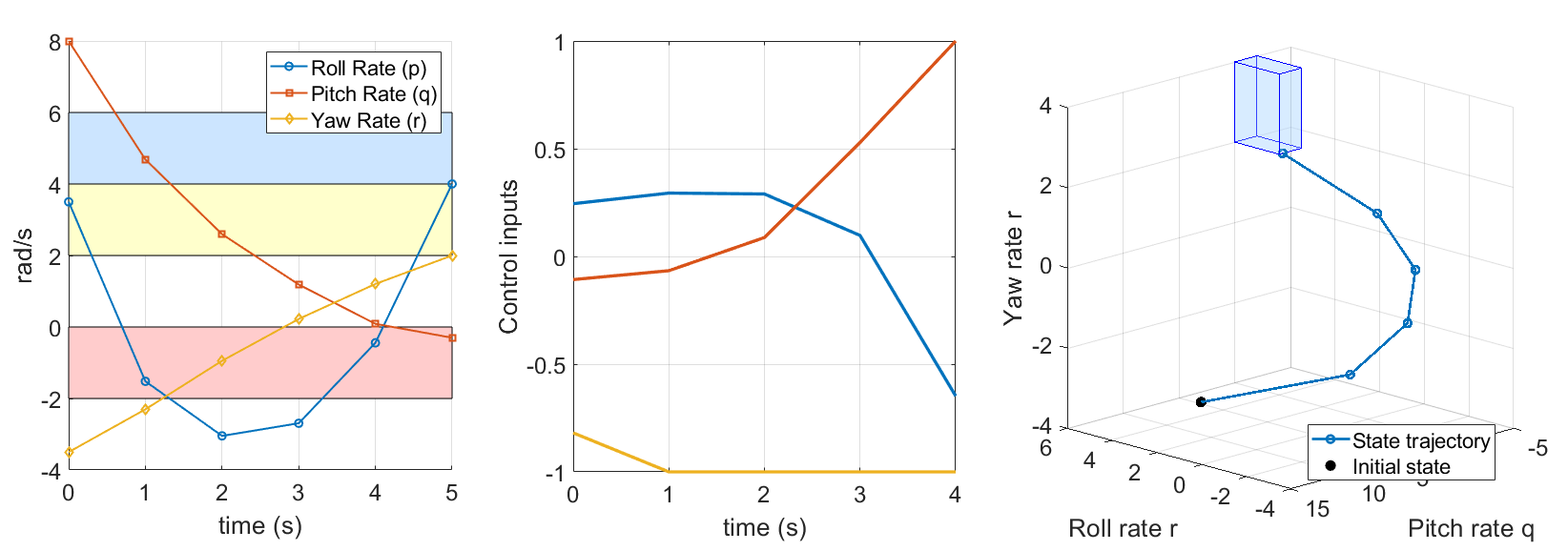}
    \caption{ADMIRE system under the exact-time reachability specification $\nex^5(x_k \in \mathcal{T})$. \emph{Left:} Evolution of the system states over time, showing that the trajectory starts outside the target set and reaches the prescribed target region at the required time instant. \emph{Middle:} Corresponding control inputs used to achieve the reachability task. \emph{Right:} Three-dimensional state evolution of the system, illustrating convergence of the trajectory to the target region at the specified time.}
    \label{fig:admire_reach}
\end{figure*}

\begin{remark}[Composing Specifications]
The results of this section can be used to compute the energetic resilience metric for compositions of these specifications, such as sequential reachability and safety tasks. Obtaining exact bounds on the metric under such compositions is an important direction for future work, along the lines of techniques used in Lemma~\ref{lem:energy_composition_full}. Such a direction would enable addressing a larger class of problems than just those discussed here.
%Computing the metric for such compositions would use results from Lemma~\ref{lem:energy_composition_full}, thus addressing a larger class of problems than just those discussed here.
\end{remark}

\section{Case Studies} \label{sec:Examples}

We validate the proposed energetic resilience framework on two systems: (i) the ADMIRE fighter jet model, and (ii) a planar mobile robot. These case studies highlight the applicability of the framework to safety-critical control, temporal logic specifications, and the dependence of energetic resilience on disturbance magnitude and initial conditions.

\subsection{ADMIRE Aircraft}
In this example, we consider the ADMIRE fighter jet model \cite{SDRA}, a popular benchmark for control applications \cite{KWT10, BO23}. We adopt the linearized, discretized model from \cite{PAOO25}, which consists of the subsystem of the fighter jet model associated with control actions:
\begin{equation}
x_{k+1} = A x_k + B_u u_k + B_w w_k,
\end{equation}
where
$$
A =
\begin{bmatrix}
0.355 & 0 & 0.3428 \\
0 & 0.6031 & 0 \\
-0.0521 & 0 & 0.7901
\end{bmatrix},
$$
% $$
% B_u =
% \begin{bmatrix}
% 0 & -2.72 & 2.72 & 0.7376 \\
% 1.298 & -0.9996 & -0.9996 & 0.0019 \\
% 0 & -0.1153 & 0.1153 & -0.8362
% \end{bmatrix}
% $$
$$
B_u =
\begin{bmatrix}
0 & -2.72 & 2.72 \\
1.298 & -0.9996 & -0.9996 \\
0 & -0.1153 & 0.1153
\end{bmatrix},
\
B_w =
\begin{bmatrix}
0.7376 \\
0.0019 \\
-0.8362
\end{bmatrix}.
$$
The state $x = [p,q,r]^\top \in \mathbb{R}^3$ consists of the roll, pitch and yaw rates measured in rad/s. The control input $u_k \in \R^3$ and $w_k$ is an adversarial input satisfying $\|w_k\|_\infty \leq \overline{w}$. This setting falls under the framework of a partial loss of control authority \cite{BO23, PO26a}.

We consider two separate temporal logic tasks: finite-horizon safety $\always^6(x_k \in \mathcal{S})$ and exact-time reachability $\nex^5(x_k \in \mathcal{T})$, where $\mathcal{S} = \mathcal{T}$ are polytopic sets of the form
\begin{gather*}
    \{x \in \R^3 \mid Gx \leq H\}, \\
G = \begin{bmatrix}
1 & 0 & 0 & -1 & 0 & 0 \\
0 & 1 & 0 & 0 & -1 & 0 \\
0 & 0 & 1 & 0 & 0 & -1
\end{bmatrix}^\top,
\\
H = \begin{bmatrix}
6 & 0 & 4 & -4 & 2 & -2
\end{bmatrix}^\top.    
\end{gather*}

The results in Figures~\ref{fig:admire_safety} and \ref{fig:admire_reach} show that the ADMIRE system successfully satisfies both the finite-horizon safety and exact-time reachability specifications under bounded disturbances. For the disturbance bound $\overline{w} = 0.1$, the computed energetic resilience is $1.787$ for the safety task and $0.048$ for the reachability task, quantifying how much additional energy is required in the malfunctioning case in each example. 
For the safety task, the controller has to keep correcting the effect of disturbances at every time step in order to stay inside the safe set throughout the trajectory. In contrast, for exact-time reachability, the system can tolerate deviations along the way as long as it reaches the target at the required final time, which gives more flexibility and leads to lower additional energy.
Understanding how energetic resilience varies across more complex classes of temporal logic specifications is an interesting direction for future work.

\subsection{Mobile Robot: Sequential Reachability}

We consider a planar mobile robot modeled as a discrete-time integrator:
\begin{equation} \label{eq:mobile}
x_{k+1} = x_k + u_k + w_k,
\end{equation}
where the system state $x \in \mathbb{R}^2$ is the robot position, the control input $u \in \R^2$ is the velocity, and $w$ is a bounded disturbance.

\begin{figure}[t]
    \centering
    \includegraphics[width=0.5\linewidth]{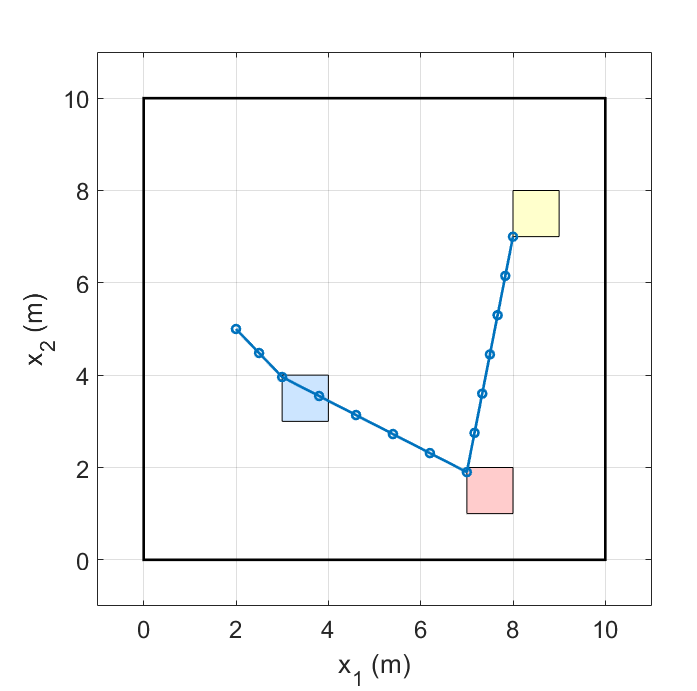}
    \caption{Mobile robot under a sequential reachability task. The robot sequentially reaches $\mathcal{T}_1$ (blue), $\mathcal{T}_2$ (red), and $\mathcal{T}_3$ (yellow), while remaining inside the prescribed safe set $\mathcal{S}$ defined by the black boundary.}
    \label{fig:mobile_seq}
\end{figure}

We consider a task of sequential reachability together with the safety constraint:
\begin{equation}
\eventually^2(\mathcal{T}_1 \wedge \eventually^7(\mathcal{T}_2 \wedge \eventually^{13}(\mathcal{T}_3))) \wedge \always^{13}(x_k \in \mathcal{S}),
\end{equation}
i.e., the system must reach three target sets in sequence while always remaining within a safe region. Here the sets $\mathcal{T}_i, i \in \{1,2,3\}$ and $\mathcal{S}$ are polytopic sets of the form
\begin{gather*}
\{x \in \R^2 \mid Gx \leq H\}, \\
    G = 
\begin{bmatrix}
1 & 0 & -1 & 0 \\
0 & 1 & 0 & -1 
\end{bmatrix}^\top.    
\end{gather*}
and
\begin{align*}
H &= [10 \;\; 10 \;\; 0 \;\; 0]^\top, &&\text{for } \mathcal{S} \\
H &= [4 \;\; 4 \;\; -3 \;\; -3]^\top, &&\text{for } \mathcal{T}_1 \\
H &= [8 \;\; 2 \;\; -7 \;\; -1]^\top, &&\text{for } \mathcal{T}_2 \\
H &= [9 \;\; 8 \;\; -8 \;\; -7]^\top &&\text{for } \mathcal{T}_3.
\end{align*}

This task is more complicated than the basic conjunctions considered in \ref{lem:energy_composition_full}, but this example seeks to show how our framework can extend to specifications whose time horizons may overlap. We compute an upper bound on the energetic resilience by computing the energetic resilience for each component task separately.

The results in Figure~\ref{fig:mobile_seq} show that the mobile robot successfully satisfies the sequential reachability task while remaining within the safe set under bounded disturbances. For the disturbance bound $\overline{w} = 0.01$, the computed upper bound on the energetic resilience is $0.0788$. This upper bound may be conservative since we simply combine the value of the metric for each component task (Lemma~\ref{lem:energy_composition_full}). However, its small value is a consequence of the simplicity of the drift-less dynamics \eqref{eq:mobile} and the small value of $\overline{w}$ considered. In the next subsection, we further investigate how the energetic resilience metric varies with the initial state as well $\overline{w}$.

\begin{figure}[t]
    \centering
    \includegraphics[width=0.5\linewidth]{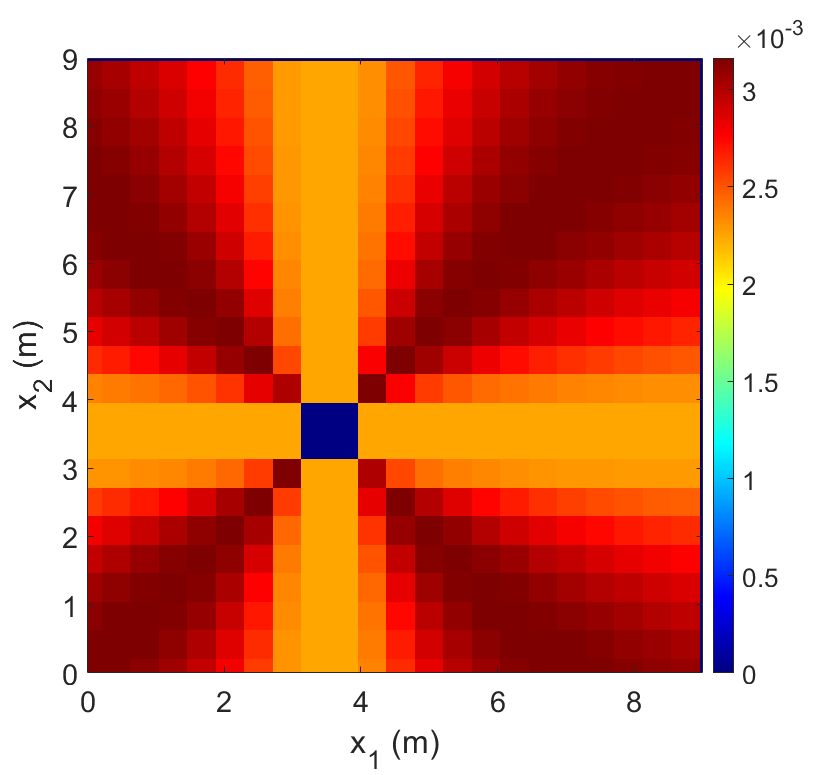}
    \caption{Heatmap of energetic resilience over a grid of initial conditions for the mobile robot. Lower values near the target, moderate values along principle axis of motion, and higher values along diagonal directions highlight initial conditions that are more sensitive to disturbances.}
    \label{fig:hm}
\end{figure}

\begin{figure}[t]
    \centering
    \includegraphics[width=0.5\linewidth]{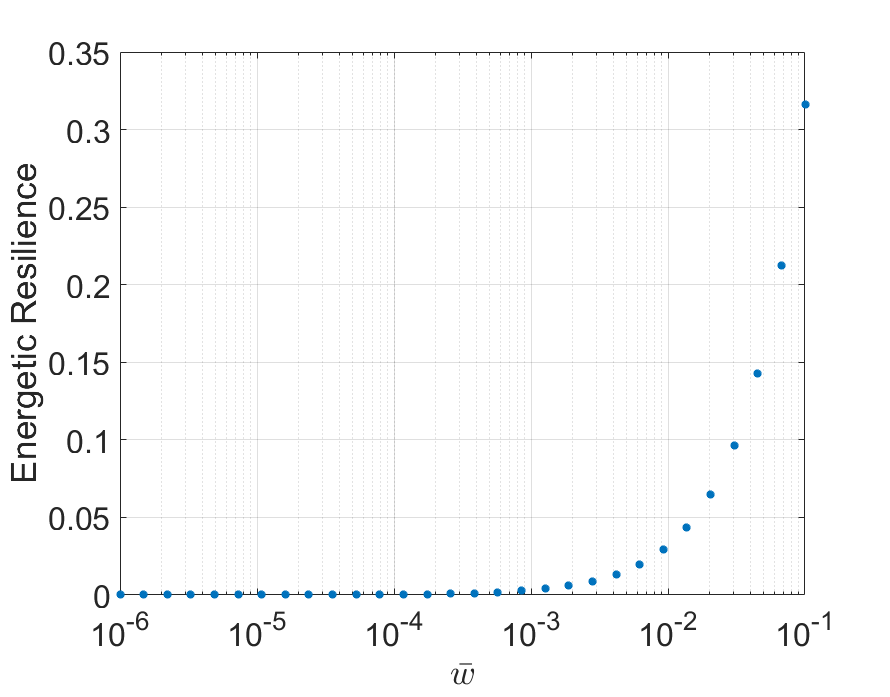}
    \caption{Energetic resilience as a function of disturbance magnitude $\overline{w}$. Resilience increases monotonically with $\overline{w}$, reflecting the additional control effort required to reject larger disturbances.}
    \label{fig:res_vs_w}
\end{figure}

\subsection{Energetic Resilience Analysis}

In this section, we present a brief analysis of the energetic resilience metric to understand its variation with the initial state and magnitude of the disturbance. We consider a simplified single reachability task for the mobile robot system.

\subsubsection{Dependence on Initial State}

We evaluate energetic resilience over a grid of initial conditions. The heatmap in Figure~\ref{fig:hm} shows that resilience is highly state-dependent:
\begin{itemize} \item Resilience is lowest for initial states inside or very close to the target set, since the specification is already satisfied and requires negligible control effort, even under disturbances. As we move away from the target, resilience increases, but not uniformly across directions. \item A “+”-shaped region centered at the target exhibits moderate resilience values, corresponding to directions aligned with the principal directions of the system dynamics and control inputs, which require relatively low energy for reaching under disturbance. \item In contrast, diagonal directions forming a “$\times$”-shaped pattern exhibit significantly higher resilience. These directions are less aligned with the control inputs and require coordinated actuation across multiple inputs, leading to higher control effort and a larger gap between nominal and malfunctioning energies. \end{itemize}
Overall, this reflects the anisotropic controllability of the system, where energetic resilience captures directional variations in control effort for task satisfaction and disturbance rejection.

\subsubsection{Dependence on Disturbance Magnitude}

For a fixed initial condition, we analyze energetic resilience as a function of the disturbance bound $\overline{w}$. The results in Figure~\ref{fig:res_vs_w} indicate a monotonic increase in resilience with $\overline{w}$, while resilience approaches 0 as $\overline{w} \to 0$, and approaches zero as $\overline{w} \to 0$, since the malfunctioning dynamics converge to the nominal case.

This behavior is intuitive: larger disturbance magnitudes require greater control effort to counteract their effect while still satisfying the temporal logic task. As a result, the gap between the malfunctioning and nominal energies increases with $\overline{w}$, leading to higher energetic resilience.

\section{Conclusions} \label{sec:Conclusion}
In this paper, we presented a metric that quantifies the resilience of a control system subjected to undesirable effects, under temporal logic specifications. Such specifications are used to encode complex specifications such as polytopic safety and sequential reachability. Our metric quantified how much additional energy is required to satisfy the given specification under undesired effects, when compared to the system's nominal behavior. We derived some compositional properties of this metric in the context of temporal logic specifications, and demonstrated how computing this metric reduces to solving quadratic programs for certain classes of specifications. Two case studies demonstrated how the proposed framework can be used to synthesize minimum-energy control inputs and how the resilience metric varies based on the initial state and magnitude of undesired effects. Future work will be devoted to exploring more complex system dynamics and undesired effects, as well as more complex specifications.

\bibliographystyle{unsrt} % We choose the "plain" reference style
\bibliography{references} % Entries are in the refs.bib file

@BOOK{BK08,
author = {Baier, C. and Katoen, J.-P.},
title = {Principles of Model Checking},
year = {2008},
publisher = {The MIT Press},
address = {Cambridge, MA, USA}
}

@BOOK{Ziegler95,
author = {Ziegler, G. M.},
title = {Lectures on Polytopes},
year = {1995},
publisher = {Springer},
address = {New York, NY, USA}
}

@INPROCEEDINGS{PO25a,
title={Energetic Resilience of Linear Driftless Systems}, 
author={R. Padmanabhan and M. Ornik},
booktitle = {11th IFAC Symposium on Robust Control Design},
month = jul,
year = {2025},
address = {Porto, Portugal}
}

@ARTICLE{PO26a,
author = {R. Padmanabhan and M. Ornik},
title = {Approximate Energetic Resilience of Nonlinear Systems under Partial Loss of Control Authority},
journal = {Automatica},
year = {2026},
volume = {187},
month = may
}

@ARTICLE{SJS25,
author={Saoud, Adnane and Jagtap, Pushpak and Soudjani, Sadegh},
title = {Temporal Logic Resilience for Dynamical Systems},
journal = {IEEE Transactions on Automatic Control},
year = {2025},
note = {early access},
doi = {10.1109/TAC.2025.3626611}
}

@TECHREPORT{SDRA,
author = {Forssell, L. and Nilsson, U.},
title = {{ADMIRE The} aero-data model in a research environment version 4.0, model description},
institution = {FOI -- Swedish Defence Research Agency},
number = {FOI-R--1624--SE},
month = dec,
year = {2005},
url = {https://www.foi.se/rest-api/report/FOI-R--1624--SE}
}

@ARTICLE{PAOO25,
author = {R. Padmanabhan and A. Aspeel and N. Ozay and M. Ornik},
title = {Mode-Prefix-Based Control of Switched Linear Systems with Applications to Fault Tolerance},
journal = {IEEE Control Systems Letters},
volume = {9},
pages = {1784--1789},
year = {2025},
month = jun
}

@ARTICLE{Yan23,
author = {Yan, Y. and Wang, X.-F. and Marshall, B. J. and Liu, C. and Yang, J. and Chen, W.-H},
title = {Surviving disturbances: {A} predictive control framework with guaranteed safety},
journal = {Automatica},
volume = {158},
month = dec,
year = {2023}
}

@ARTICLE{BO23,
author = {Bouvier, J.-B. and Ornik, M.},
title = {Resilience of linear systems to partial loss of control authority},
journal = {Automatica},
volume = {152},
year = {2023},
month = jun
}

@ARTICLE{AH19,
author = {Amin, A. A. and Hasan, K. M.},
title = {A review of Fault Tolerant Control Systems: {Advancements} and applications},
journal = {Measurement},
volume = {143},
month = sep,
year = {2019},
pages = {58--68}
}

@article{rehak19,
  title={Complex approach to assessing resilience of critical infrastructure elements},
  author={Rehak, David and Senovsky, Pavel and Hromada, Martin and Lovecek, Tomas},
  journal={International Journal of Critical Infrastructure Protection},
  volume={25},
  month = jun,
  pages={125--138},
  year={2019}
}

@ARTICLE{PXG20,
  author={Pang, Y. and Xia, H. and Grimble, M. J.},
  journal={IEEE Transactions on Systems, Man, and Cybernetics: Systems}, 
  title={Resilient Nonlinear Control for Attacked Cyber-Physical Systems}, 
  year={2020},
  volume={50},
  number={6},
  pages={2129--2138},
  month = jun
}

@ARTICLE{ZZ24,
author = {Zhao, R. and Zuo, Z. and Tan, Y. and Wang, Y. and Zhang, W.},
title = {Resilient control of networked switched systems subject to deception attack and {DoS} attack},
journal = {Automatica},
volume = {169},
month = nov,
year = {2024}
}

@BOOK{HLXZ21,
author = {Hu, Q. and Li, B. and Xiao, B. and Zhang, Y.},
title = {Control Allocation for Spacecraft Under Actuator Faults},
publisher = {Springer Nature},
address = {Singapore},
year = {2021}
}

@ARTICLE{FMPS19,
  author={Farahani, Samira S. and Majumdar, Rupak and Prabhu, Vinayak S. and Soudjani, Sadegh},
  journal={IEEE Transactions on Automatic Control}, 
  title={Shrinking Horizon Model Predictive Control With Signal Temporal Logic Constraints Under Stochastic Disturbances}, 
  year={2019},
  volume={64},
  number={8},
  pages={3324--3331},
  month = aug}

@INPROCEEDINGS{IHL23,
  author={Ilyes, Roland B. and Ho, Qi Heng and Lahijanian, Morteza},
  booktitle={2023 IEEE International Conference on Robotics and Automation}, 
  title={Stochastic Robustness Interval for Motion Planning with Signal Temporal Logic}, 
  year={2023},
  month = may,
  address = {London, United Kingdom}
  }

@ARTICLE{Liu24,
  author={Liu, Di and Mair, Sebastian and Yang, Kang and Baldi, Simone and Frasca, Paolo and Althoff, Matthias},
  journal={IEEE Transactions on Intelligent Vehicles}, 
  title={Resilience in Platoons of Cooperative Heterogeneous Vehicles: {Self}-Organization Strategies and Provably-Correct Design}, 
  year={2024},
  volume={9},
  number={1},
  month = jan,
  pages={2262-2275},
  }

@INPROCEEDINGS{Zhang24,
  author={Zhang, Sihua and Zhai, Di-Hua and Dai, Xiaobing and Huang, Tzu-Yuan and Xia, Yuanqing and Hirche, Sandra},
  booktitle={2024 IEEE Conference on Decision and Control}, 
  title={Learning-based Parameterized Barrier Function for Safety-Critical Control of Unknown Systems}, 
  year={2024},
  month = dec,
  address = {Milan, Italy}
  }

@INPROCEEDINGS{SJS23,
  author={Saoud, Adnane and Jagtap, Pushpak and Soudjani, Sadegh},
  booktitle={2023 IEEE Conference on Decision and Control}, 
  title={Temporal Logic Resilience for Cyber-Physical Systems}, 
  year={2023},
  month = dec,
  address = {Singapore, Singapore}
  }

@ARTICLE{Na20,
author = {Na, G. and Park, G. and Turri, V. and Johansson, K. H. and Shim, H. and Eun, Y.},
title = {Disturbance observer approach for fuel-efficient heavy-duty vehicle platooning},
journal = {Vehicle System Dynamics},
volume = {58},
number = {5},
month = may,
year = {2020},
pages = {748--767}
}

@ARTICLE{HV1,
author = {Opila, D. F. and Wang, X. and McGee, R. and Gillespie, R. B. and Cook, J. A. and Grizzle, J. W.},
title = {An Energy Management Controller to Optimally Trade Off Fuel Economy and Drivability for Hybrid Vehicles},
journal = {IEEE Transactions on Control Systems Technology},
volume = {20},
number = {6},
month = nov,
year = {2012},
pages = {1490--1505}
}

@ARTICLE{Elias22,
author = {Plaza, E. and Santos, M.},
title = {Management and intelligent control of in-flight fuel distribution in a commercial aircraft},
journal = {Expert Systems},
year = {2022},
note = {early access}
}

@ARTICLE{MS16,
author = {Mersky, A. C. and Samaras, C.},
title = {Fuel economy testing of autonomous vehicles},
journal = {Transportation Research Part C: Emerging Technologies},
volume = {65},
month = apr,
year = {2016},
pages = {31--48}
}

@INPROCEEDINGS{PBDO24,
author = {Padmanabhan, R. and Bakker, C. and Dinkar, S. A. and Ornik, M.},
title = {How Much Reserve Fuel: {Quantifying} the Maximal Energy Cost of System Disturbances},
booktitle = {63rd IEEE Conference on Decision and Control},
address = {Milan, Italy},
month = dec,
year = {2024}
}

@INPROCEEDINGS{KWT10,
  author={Khelassi, Ahmed and Weber, Philippe and Theilliol, Didier},
  booktitle={2010 Conference on Control and Fault-Tolerant Systems (SysTol)}, 
  title={Reconfigurable control design for over-actuated systems based on reliability indicators}, 
  year={2010},
  pages={365--370},
address = {Nice, France},
month = oct
}

@inproceedings{de2013linear,
	title={Linear temporal logic and linear dynamic logic on finite traces},
	author={De Giacomo, G. and Vardi, M. Y.},
	booktitle={IJCAI'13 Proceedings of the Twenty-Third international joint conference on Artificial Intelligence},
	pages={854--860},
	year={2013},
	organization={Association for Computing Machinery}
}

@inproceedings{monir2025computation,
  title={Computation of Feasible Assume-Guarantee Contracts: A Resilience-based Approach},
  author={Monir, Negar and Si, Youssef AIT and Das, Ratnangshu and Jagtap, Pushpak and Saoud, Adnane and Soudjani, Sadegh},
  booktitle={2025 IEEE 64th Conference on Decision and Control (CDC)},
  pages={8122--8129},
  year={2025},
  organization={IEEE}
}

@inproceedings{ait2025maximally,
  title={Maximally resilient controllers under temporal logic specifications},
  author={Ait Si, Youssef and Das, Ratnangshu and Monir, Negar and Soudjani, Sadegh and Jagtap, Pushpak and Saoud, Adnane},
  booktitle={2025 IEEE 64th Conference on Decision and Control (CDC)},
  pages={5034--5040},
  year={2025},
  organization={IEEE}
}

\end{document}